\newtheorem{lemma}{Lemma}
\newtheorem{theorem}{Theorem}
\newtheorem{definition}{Definition}
\newcommand{\E}[1]{\mathbf{E}_{}\!\left[#1\right]}
\def\P{{\mathbf P}}
\def \d{\mathrm{d}}
\definecolor{clair}{rgb}{0.910,0.933,0.957}
\definecolor{moyen}{rgb}{0.102,0.349,0.557}
\definecolor{fonce}{rgb}{0.067,0.231,0.369}
\definecolor{titres}{rgb}{0.137,0.466,0.741}
\definecolor{apricot}{rgb}{0.961,0.506,0.216}
\definecolor{grey}{rgb}{0.58,0.58,0.58}
\begin{document}

\pgfdeclarelayer{background layer} 
\pgfdeclarelayer{foreground layer} 
\pgfsetlayers{background layer,main,foreground layer}

\title{Computing the $k$-coverage of a wireless network}
\author{\IEEEauthorblockN{Ana\"is Vergne,
Laurent Decreusefond, and
Philippe Martins}\\
\IEEEauthorblockA{LTCI, T\'el\'ecom ParisTech, Universit\'e Paris-Saclay, 75013, Paris, France}
}

\maketitle
\begin{abstract}
Coverage is one of the main quality of service of a wireless
network. $k$-coverage, that is to be covered simultaneously by $k$
network nodes,  is synonym of reliability and numerous applications
such as multiple site MIMO features, or handovers. We introduce here a
new algorithm for computing the $k$-coverage of a wireless
network. Our method is based on the observation that $k$-coverage can
be interpreted as $k$ layers of $1$-coverage, or simply coverage. We
use simplicial homology to compute the network's topology and a
reduction algorithm to indentify the layers of $1$-coverage. We
provide figures and simulation results to illustrate our algorithm.
\end{abstract}

\section{Introduction}
Wireless networks encompass cellular networks, WiFi access points,
sensor networks, and so on. With the increasing usage of high data
rates mobile devices such as smartphones and tablets, and the
development of the Internet of Things (IoT), they have become
indispensable in our everyday lives. A common key quality of service
of this type of networks is the coverage. The coverage of a wireless
network is the set of points that are in the sensing range of at least
one network node. The greater the covered area, the more mobile
devices can have access to it. For cellular networks, coverage can
even be a governmental obligation. The absence of coverage holes inside
the covered area is needed to offer a continuous access to services.

However, network nodes are often not regularly deployed on lattice or
according to the hexagonal model in practice, see
\cite{gomez_case_2015} for cellular networks in France for example.
And, deciding whether a set of network nodes does cover
a given area is not that easy for arbitrary deployments. Simplicial
homology can help us do that, considering the network nodes GPS
positions and their coverage ranges, it is possible to build a purely
combinatorial object, namely an abstract simplicial complex, of which
it is possible to compute the topology. Basically an abstract simplicial
complex is the generalization of the concept of graph, it is made of
$k$-simplices where $0$-simplices are vertices, $1$-simplices are
edges, $2$-simplices are triangles, $3$-simplices are tetrahedron and
so on. In particular, geometrical simplicial complexes such as the
\u{C}ech complex and the Vietoris-Rips complex, represent exactly and
approximatively respectively, the topology of the union of the
coverage disks as stated in~\cite{ghrist_coverage_2005}. Then
algebraic topology, \cite{hatcher_algebraic_2002}, is a mathematical
tool that can compute the number of connected components, of coverage
holes, and of 3D voids, namely the Betti numbers of the
simplicial complex representing the network, as explained in
\cite{de_silva_coordinate-free_2006}. Since, the computational time
to obtain the Betti numbers can explode with the size of the
simplicial complex, many works focus on faster ways to compute them,
for instance in a decentralized way \cite{muhammad_decentralized_2007}, 
using persistent homology \cite{zomorodian_computing_2005}, thanks to
chain complexes reduction \cite{kaczynski_homology_1998} , or with
witness complexes reduction \cite{de_silva_topological_2004}. In our
work we use simplicial complex reduction to reduce a simplicial
complex to the minimum number of points needed to provide
$1$-coverage. Precisely, we use the reduction algorithm presented in 
\cite{vergne_reduction_2013}, that can also be found for coverage hole
detection in \cite{yan_homology-based_2015} and for energy efficiency
in cellular networs in \cite{vergne_simplicial_2015}.

Coverage can thus be computed mathematically thanks to algebraic
topology. However $k$-coverage computing is not that simple. Indeed, a
point is said to be $k$-covered when it is in the covered area of at
least $k$ network nodes. Consequently, an area is $k$-covered whenever
every point in it is $k$-covered. The expected $k$-coverage in
wireless networks has been studied in \cite{yen_expected_2006} in
order to propose a node scheduling scheme that conserves energy while
retaining network coverage. In \cite{li_ensuring_2017}, the authors
use $k$-order Vorono\"i diagrams to compute the density of network
nodes required to achieve $k$-coverage.

In this article, we propose a method and an algorithm for computing
the $k$-coverage of a given wireless network. Theoretically it is easy
to compute the probability for a point to be $k$-covered for a
wireless network generated by a Poisson point process. However it is
more difficult to apprehend the $k$-coverage of a whole
area. Moreover, probabilistic results can not be applied to every
wireless network. That is why we need simplicial homology
representation to compute the topology of a given network as a whole. We
then exhibit that the $k$-coverage can be seen as $k$ layers of
$1$-coverage and give an algorithm that compute the $k$-coverage of a
wireless network. For operating purposes, the $k$ layers of network
nodes that ensure each $1$-coverage are returned by our algorithm.

First in Section \ref{sec_kcov}, we define the $k$-coverage and
discuss its application for wireless networks such as IoT sensor
networks and cellular networks. We present a probabilistic network model and give
some theoretical results in Section \ref{sec_mod}, and introduce
few needed mathetical tools in Section \ref{sec_simp}. Then in Section
\ref{sec_alg}, we give our algorithm for computing $k$-coverage, and
simulation results in Section \ref{sec_sim}. Finally we conclude in
Section \ref{sec_ccl}.

\section{$k$-coverage}
\label{sec_kcov}
Cellular networks, Wireless Local Area Networks (WLANs), and sensor
networks take part in the family of wireless networks. In these
networks, coverage define the utility of the network. In cellular
networks or WLANs, users can access the service only if they are
covered by a network node. In sensor networks, sensor can communicate
only if they are in the sensing range of each other.
In a wireless network, a point is said to be covered if it is in the
sensing range of a network node, that is to say if it is in the
coverage of this node. An area is then covered, when every point of it
is covered. By extension, an area is $k$-covered when every point of
the area is in the coverage of at least $k$ network nodes. A wireless
network providing $k$-coverage for an area with a large $k$ is then a
densely deployed network.

In the literature, $k$-coverage is more often used for sensor networks such as
Low-Power Wide Area Networks (LPWANs), and Internet of Things (IoT)
\cite{yen_expected_2006, li_ensuring_2017}.
Indeed the benefits of $k$-coverage include better reliability, better
accuracy in sensor measurements, greater throughput by using multiple
channels, etc. And these uses concern primarily sensor nodes.
Moreover, sensor nodes are small, live on battery, and are cheap to
buy and replace, so they can be deployed in large quantities, thus
providing easily $k$-coverage with a great $k$.

However, $k$-coverage can also be of interest for cellular networks, where
it becomes synonym of multi-site transmitter. The first application is
the handover. Actually, a handover is performed when a user changes cells during a
communication. It is called a soft-handover when the user is connected
simultaneously to multiple cells for the transition between cells, or
for interference mitigation in dense area in 3G networks.
Therefore, a user in a $k$-covered area would have the
possibility to have handovers with $k$ different cells, which means
that telecommunication operators could make trafic off-loading
decisions by directing users to less-busy cells, or offer better radio
channels thanks to antenna diversity. In 4G and later networks,
$k$-coverage means also that MIMO transmissions from multiple base
stations to a user can be performed. That is the basis of the
Coordinated Multi-Point Joint Processing (CoMP JP) scheme that allows
great capacity gains \cite{vergne_evaluating_2010}. In CoMP JP, two or
more base stations can cooperate to serve simultaneously a user leading to a
throughput multiplied by $2$ or more.

\section{Probabilistic analysis}
\label{sec_mod}
We represent the wireless network nodes by a Poisson point process:
\begin{definition}
Let $\d \mu=\lambda \d x$ be the Lebesgue measure on $E \subseteq \mathbb{R}^2$, $N$ is a
spatial Poisson point process of intensity $\lambda >0$ on $E$ if:
\begin{itemize}
\item $N(A)$ the number of points that fall in $A \subset E$ follows a
  Poisson law
$$ \P[N(A)=k] = e^{-\mu(A)}\frac{\mu(A)^k}{k!}$$
\item If $A, B \subset E$ such that $A \cap B = \emptyset$, then
  $N(A)$ and $N(B)$ are independant.
\end{itemize}
\end{definition}
We can note that $\mu(A)=\lambda S(A)$ where $S(A)$ is the area of
$A$. Moreover, conditionnally to $N(A)=n$ for $A \subset E$, then the
points of $N$ are independantly and uniformly distributed on $A$.

We suppose that every network node has the same sensing range $r$. Thus
its coverage area is a disk of radius $r>0$. Then for $x \in E$, the
probability for $x$ to be $k$-covered is given by:
\begin{eqnarray*}
\P[x \text{ } k\text{-covered}] = \P[\exists y_1,\dots,y_k \in N |
                                      x \in \bigcap_{i=1}^kB(y_i,r)],
\end{eqnarray*}
where $B(y,r)$ is the ball of center $y \in E$ and radius $r>0$.
We immediatly have that:
\begin{eqnarray*}
\P[x \text{ } k\text{-covered}]\!\! &=& \!\!\P[\exists y_1,\dots,y_k \in N |
                                      x \in \bigcap_{i=1}^kB(y_i,r)]\\
&=&\!\!\P[\exists y_1,\dots,y_k \in N | \forall i, y_i \in B(x,r)]\\
&=&\!\!\P[N(B(x,r))\geq k]\\
&=&\!\! 1 - \P[N(B(x,r)) < k]\\
&=&\!\! 1-\sum_{i=0}^{k-1}e^{-\lambda \pi r^2}\frac{(\lambda \pi r^2)^i}{i!}.
\end{eqnarray*}

We can see in Fig.~\ref{fig_courbes} the probability for a point to be
$k$-covered for $k=1,\dots,6$, depending on $\lambda \pi r^2$. As $r$
is fixed, only $\lambda$ varies. Logically, as $\lambda$ grows, the
probability to be $k$-covered tends to $1$, and the greater $k$ is,
the smaller the probability to be $k$-covered is.
\begin{figure}[h]
  \centering
    \includegraphics[width=7cm]{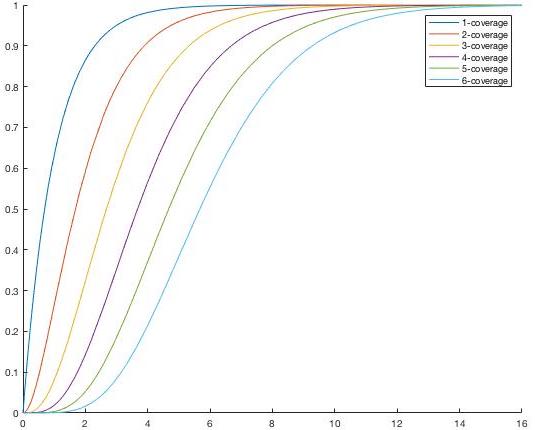}
  \caption{Probability for a point to be $k$-covered depending on
    $\lambda \pi r^2$}
\label{fig_courbes}
\end{figure}

The probability that a point $x \in E$ is exactly $k$-covered and not
$(k+1)$-covered is then:
\begin{eqnarray*}
\P[x \text{ exactly } k\text{-covered}] &=& \P[N(B(x,r))=k]\\
&=&e^{-\lambda \pi r^2}\frac{(\lambda \pi r^2)^k}{k!}.
\end{eqnarray*}

We can derive the mean $k$ for which a point $x \in E$ is $k$-covered
and not $(k+1)$-covered:
\begin{eqnarray*}
\E{k}&=&\sum_{k=1}^\infty k \P[x \text{ exactly } k\text{-covered}]\\
&=&\sum_{k=1}^\infty e^{-\lambda \pi r^2}\frac{(\lambda \pi
    r^2)^k}{(k-1)!}\\
&=&e^{-\lambda \pi r^2} \sum_{k=0}^\infty \frac{(\lambda \pi
    r^2)^k}{k!} (\lambda \pi r^2) \\
&=& \lambda \pi r^2.
\end{eqnarray*}
Therefore, the mean $k$ for which a point $x \in E$ is $k$-covered and
not $(k-1)$-covered is directly proportional to $\lambda$.


However, considering the probability for a point to be $k$-covered is
not sufficient to benefit from the advantages of $k$-coverage. Indeed,
in wireless networks, reception devices such as phones are
mobile. Therefore, one needs a whole area to be $k$-covered to offer
$k$-coverage applications such as joint processing or handover.
The probability of $k$-coverage of one point is an upper-bound of the
probability of a whole area to be completely $k$-covered without any
hole. Indeed it is easier to ensure that a point is covered, than a
whole area without any hole. So probabilistic results can not be used
for engineering purposes. That is why we need to consider another
approach : to study and compute the coverage of the network as a
whole, that is mathematically to study the topology of the network.

\section{Simplicial homology and algebraic topology}
\label{sec_simp}
Considering a set of points representing network nodes, the first idea
to apprehend the topology of the network would be to look at the
neighbors graph: if the distance between two points is less than a
given parameter then an edge is drawn between them. However this
representation is too limited to transpose the network's
topology. First, only $2$-by-$2$ relationships are represented in the
graph, there is no way to grasp interactions between three or more
nodes. Moreover, there is no concept of coverage in a graph. That is
why we are interested in more complex objects.

Indeed, graphs can be generalized to more generic combinatorial
objects known as simplicial complexes. While graphs model binary
relations, simplicial complexes can represent higher order
relations. A simplicial complex is thus a combinatorial object made up
of vertices, edges, triangles, tetrahedra, and their $n$-dimensional
counterparts. 
Given a set of vertices $X$ and an integer $k$, a $k$-simplex is an
unordered subset of $k+1$ vertices $\{x_0,\dots, x_k\}$ where $x_i\in
X, \forall i \in \{0,\dots,k\}$ and $x_i\not=x_j$ for all
$i\not=j$. Thus, a $0$-simplex is a vertex, a $1$-simplex an edge, a
$2$-simplex a triangle, a $3$-simplex a tetrahedron, etc. See
Fig.~\ref{fig_simplices} for instance. 

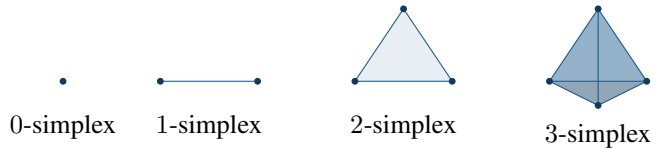
\begin{figure}[h]
  \centering
    \begin{tikzpicture}[scale=0.64]
\fill [color=fonce] (0,0) circle (2pt);
\node [below] at (0,-0.5) {$0$-simplex};
\draw [color=moyen] (2,0)--(4,0);
\fill [color=fonce] (2,0) circle (2pt);
\fill [color=fonce] (4,0) circle (2pt);
\node [below] at (3,-0.5) {$1$-simplex};
\fill [color=clair] (6,0)--(8,0)--(7,1.5);
\draw [color=moyen] (6,0)--(8,0)--(7,1.5)--(6,0);
\fill [color=fonce] (6,0) circle (2pt);
\fill [color=fonce] (8,0) circle (2pt);
\fill [color=fonce] (7,1.5) circle (2pt);
\node [below] at (7,-0.5) {$2$-simplex};
\fill [color=moyen, opacity=1] (10,0)--(12,0)--(11,1.5);
\fill [color=fonce,opacity=1] (10,0)--(12,0)--(11,-0.5);
\fill [color=clair,opacity=0.6] (10,0)--(11,-0.5)--(11,1.5);
\fill [color=clair,opacity=0.6] (11,-0.5)--(12,0)--(11,1.5);
\draw [color=moyen] (10,0)--(12,0)--(11,1.5)--(10,0);
\draw [color=moyen] (10,0)--(11,-0.5)--(11,1.5);
\draw [color=moyen] (12,0)--(11,-0.5);
\fill [color=fonce] (10,0) circle (2pt);
\fill [color=fonce] (12,0) circle (2pt);
\fill [color=fonce] (11,1.5) circle (2pt);
\fill [color=fonce] (11,-0.5) circle (2pt);
\node [below] at (11,-0.7) {$3$-simplex};
    \end{tikzpicture}
  \caption{Examples of $k$-simplices.}\label{fig_simplices}
\end{figure}

Any subset of vertices included in the set of the $k+1$ vertices of a
$k$-simplex is a face of this $k$-simplex. A $k$-face is then a face
that is a $k$-simplex. The inverse notion of face is coface. An
abstract simplicial complex is a set of simplices 
such that all faces of these simplices are also in the set of
simplices.

In this article, we are intersted in representing the topology of a
wireless network, we introduce the two following abstract simplicial
complexes: 
\begin{definition}[\u{C}ech complex]
 Let $\omega$ be a finite set of points in
  $\mathbb{R}^2$, and $r$ a real positive number. The \u{C}ech
  complex of parameter $r$ of $\omega$, 
  $\mathcal{C}_{r}(\omega)$, is the abstract simplicial complex
  whose $k$-simplices correspond to the unordered $(k+1)$-tuples of
  vertices in $\omega$ such that the intersection of the $k+1$ balls
  centered on them is non empty.
\end{definition}

\begin{definition}[Vietoris-Rips complex]
Let $\omega$ be a finite set of points in
  $\mathbb{R}^2$, and $\epsilon$ a real positive number. The Vietoris-Rips
  complex of parameter $\epsilon$ of $\omega$, 
  $\mathcal{R}_{\epsilon}(\omega)$, is the abstract simplicial complex
  whose $k$-simplices correspond to the unordered $(k+1)$-tuples of
  vertices in $\omega$ which are pairwise within distance less than
  $\epsilon$ of each other.
\end{definition}

The \u{C}ech complex provides the representation of the exact topology
of the network (see the Nerve lemma in
\cite{ghrist_coverage_2005}) but can be tricky to compute due to the
check of whether three disks intersect or not.
One can see easily that the Vietoris-Rips complex
$\mathcal{R}_{2r}(\omega)$ is an approximation of the \u{C}ech complex
$\mathcal{C}_{r}(\omega)$ that is way easier to compute since it is a
clique complex based only on the neighbors graph information.
This approximation is quite good:
in the case of a random uncorrelated deployment with
network nodes deployed according to a Poisson point process the error
is less than $0.06\%$ in the computation of the covered area
\cite{yan_accuracy_2012}. 
An example of a \u{C}ech complex
representing a wireless network can be seen in Fig.~\ref{fig_net}. We
can see $4$ coverage holes in the network that are highlighted in the
simplicial complex representation.
\begin{figure}[h]
  \centering
    \includegraphics[width=4.35cm]{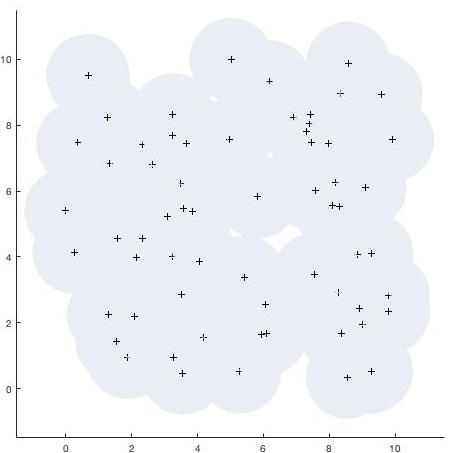}
 \includegraphics[width=4.35cm]{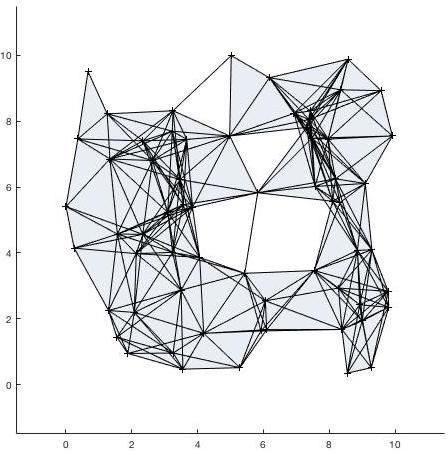}
  \caption{ A \u{C}ech complex representing a wireless network}
\label{fig_net}
\end{figure}

Given an abstract simplicial complex, one can define an orientation on
the simplices by defining an order on the vertices, where a change in the
orientation, that is a swap between two vertices, corresponds to a
change in the sign. 
Then let us define the vector spaces of the $k$-simplices of a
simplicial complex, and the associated boundary maps:
\begin{definition}
  Let $S$ be an abstract simplicial complex.

 For any integer $k$,
  $\mathscr{C}_k(S)$ is the vector space spanned by the set of oriented
  $k$-simplices of $S$.
\end{definition}

\begin{definition}
Let $S$ be an abstract simplicial complex and $\mathscr{C}_k(S)$ the
vector space of its $k$-simplices for any $k$ integer.

The boundary map $\partial_k$ is defined as the linear
  transformation $\partial_k: \mathscr{C}_k(S)\rightarrow
  \mathscr{C}_{k-1}(S)$ which acts on the 
  basis elements $[x_0,\dots,x_k]$ of $\mathscr{C}_k(S)$ via:
  \begin{eqnarray*}
    \partial_k [x_0,\dots,x_k]= \sum_{i=0}^{k}{(-1)}^i [x_0,\dots,x_{i-1},x_{i+1},\dots,x_k].
  \end{eqnarray*}
\end{definition}

For example, for a $2$-simplex we have:
\begin{figure}[H]
\centering 
\begin{tikzpicture}[scale=0.7]
\coordinate (x1) at (0,0);
\coordinate (x2) at (1,1);
\coordinate (x3) at (2,0);
\coordinate (x4) at (5,0);
\coordinate (x5) at (6,1);
\coordinate (x6) at (7,0);
 \fill [color=clair] (x1)--(x2)--(x3);
\draw[draw=moyen] (x1)--(x2);
\draw[draw=moyen](x2)--(x3);
\draw[draw=moyen](x3)--(x1);
\draw[draw=moyen, arrows={-triangle 45}] (x4)->(x5);
\draw[draw=moyen, arrows={-triangle 45}] (x5)->(x6);
\draw[draw=moyen, arrows={-triangle 45}] (x6)->(x4);
\draw[->][color=fonce] (1,.2) arc (270:-30:.2);
\draw [fill=fonce] (x1) circle (2pt);
\draw [fill=fonce] (x2) circle (2pt);
\draw [fill=fonce] (x3) circle (2pt);
\draw [fill=fonce] (x4) circle (2pt);
\draw [fill=fonce] (x5) circle (2pt);
\draw [fill=fonce] (x6) circle (2pt);
\node [below] at (x1) {$x_0$};
\node [above] at (x2) {$x_1$};
\node [below] at (x3) {$x_2$};
\node [below] at (x4) {$x_0$};
\node [above] at (x5) {$x_1$};
\node [below] at (x6) {$x_2$};
\node [below] at (1,-0.5) {\small $\partial_2([x_0,x_1,x_2])$};
\node [below] at (3,-0.75) {\small $=$};
\node [below] at (6,-0.5) {\small $[x_1,x_2]-[x_0,x_2]+[x_0,x_1]$};
 \end{tikzpicture}
\end{figure}

As its name indicates, the boundary map applied to a linear
combination of simplices gives its boundary. The boundary of a
boundary is the null application. Therefore the following theorem can be easily
demonstrated (see~\cite{hatcher_algebraic_2002} for instance):
\begin{theorem}
  For any $k$ integer, 
$\partial_k \circ\partial_{k+1}=0.$
\end{theorem}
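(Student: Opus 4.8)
The plan is to reduce the identity to a computation on basis elements and then carry out a careful sign count. Since both $\partial_k$ and $\partial_{k+1}$ are linear, it is enough to check $\partial_k\bigl(\partial_{k+1}[x_0,\dots,x_{k+1}]\bigr)=0$ on an arbitrary oriented $(k+1)$-simplex $[x_0,\dots,x_{k+1}]$.

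First I would apply the definition of $\partial_{k+1}$ to obtain
\begin{eqnarray*}
\partial_{k+1}[x_0,\dots,x_{k+1}]=\sum_{i=0}^{k+1}(-1)^i[x_0,\dots,\widehat{x_i},\dots,x_{k+1}],
\end{eqnarray*}
where $\widehat{x_i}$ indicates that the vertex $x_i$ is omitted, and then apply $\partial_k$ term by term. Each application of $\partial_k$ deletes a second vertex $x_j$ with $j\neq i$. The crucial observation is that the sign attached to this second deletion is $(-1)^j$ when $j<i$, but $(-1)^{j-1}$ when $j>i$, because removing $x_i$ first shifts the position of every later vertex down by one slot.

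Next I would split the resulting double sum into the part over pairs $j<i$ and the part over pairs $j>i$. Every term is a $(k-1)$-simplex of the form $[x_0,\dots,\widehat{x_a},\dots,\widehat{x_b},\dots,x_{k+1}]$ with two vertices deleted, $a<b$. Relabelling the two summation indices in the $j>i$ part (swapping their roles) rewrites it as a sum over the same ordered pairs of deleted vertices as the $j<i$ part. One then reads off that the $j<i$ sum contributes the coefficient $(-1)^{i+j}$ to a given term while the relabelled $j>i$ sum contributes $(-1)^{i+j-1}$; since these are opposite, the two sums cancel term by term, so $\partial_k\circ\partial_{k+1}=0$.

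The only genuine difficulty here is the sign bookkeeping — in particular, not forgetting the index shift caused by the first deletion when the second deletion occurs at a larger position; once the two sums are expressed over a common index set, the cancellation is immediate. This is the classical argument; see~\cite{hatcher_algebraic_2002}.
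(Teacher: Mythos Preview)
Your argument is correct and is precisely the classical sign-cancellation computation from~\cite{hatcher_algebraic_2002}; the paper does not spell out its own proof but simply states that the result ``can be easily demonstrated'' and refers the reader to that same source, so your proposal is in fact more detailed than what the paper provides.
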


Let $S$ be an abstract simplicial complex. Then we can denote 
the $k$-th boundary group of $S$ as $B_k(S)=\mathrm{im}
  \, \partial_{k+1}$, and the $k$-th cycle group of $S$ as
  $Z_k(S)=\ker \partial_{k}$. We have $B_k(S)\subset Z_k(S)$.
We are now able to define the $k$-th homology group and its dimension:
\begin{definition}
  The $k$-th homology group of an abstract simplicial complex $S$ is
  the quotient vector space: 
  \begin{eqnarray*}
    H_k(S)=\frac{Z_k(S)}{B_k(S)}.
  \end{eqnarray*}
 The $k$-th Betti number of the abstract simplicial complex $S$ is: 
  \begin{eqnarray*}
    \beta_k(S)=\dim H_k(S).
  \end{eqnarray*}
\end{definition}

According to its definition, the $k$-th Betti number counts the number
of cycles of $k$-simplices that are not boundaries of
$(k+1)$-simplices, that are the $k$-th dimensional holes. In small
dimensions, they have a geometrical interpretation:
\begin{itemize}
\item $\beta_0$ is the number of connected components,
\item $\beta_1$ is the number of coverage holes,
\item $\beta_2$ is the number of $3$D-voids.
\end{itemize}
For any $k\geq d$ where $d$ is the dimension, we have $\beta_k=0$.

For further reading on algebraic topology, see
\cite{hatcher_algebraic_2002}. 

\section{Algorithm}
\label{sec_alg}

Thanks to simplicial homology, we have a representation for a wireless
network that allows the computation of the network's topology, that is
its coverage, or $1$-coverage. Computing the $k$-coverage is another
problem. In order to do that, we choose to view the $k$-coverage as $k$ layers
of coverage:
\begin{lemma}
An area is $k$-covered, for $k$ integer, if there exists $k$ sets of
network nodes without any common nodes such that each set provides
$1$-coverage on the area.
\end{lemma}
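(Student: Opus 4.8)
The plan is to prove the statement directly by a pointwise argument: I would show that every point of the area is covered by at least $k$ distinct network nodes, by exhibiting one covering node inside each of the $k$ node-disjoint layers and using disjointness to guarantee that these nodes are all different.

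First I would fix an arbitrary point $x$ in the area $A$, and let $S_1,\dots,S_k$ denote the $k$ pairwise node-disjoint sets of network nodes, each of which provides $1$-coverage of $A$. By the definition of $1$-coverage, for every index $i\in\{1,\dots,k\}$ there is a node $y_i\in S_i$ with $x\in B(y_i,r)$; that is, $x$ lies in the coverage disk of $y_i$.

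Next I would observe that the nodes $y_1,\dots,y_k$ are pairwise distinct: if $y_i=y_j$ for some $i\neq j$, then this node would belong to $S_i\cap S_j$, contradicting the hypothesis that the layers have no common node. Hence $\{y_1,\dots,y_k\}$ consists of exactly $k$ distinct network nodes whose coverage disks all contain $x$, so $x$ is $k$-covered. Since $x$ was an arbitrary point of $A$, the whole area $A$ is $k$-covered, which is the claim.

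The argument is short, and the only point requiring care is the step that upgrades ``$x$ is covered by each layer'' to ``$x$ is covered $k$ times'': this is precisely where node-disjointness of the layers is essential, since otherwise the covering nodes chosen from different layers could coincide and the multiplicity count would collapse. I would also note that only this implication is asserted and used here — the converse, namely that every $k$-covered area can be split into $k$ node-disjoint $1$-covering layers, is considerably more delicate (it is a covering-decomposition-type question) and is not needed, because the algorithm of Section~\ref{sec_alg} exhibits the layers explicitly and it is this implication that certifies the resulting $k$-coverage.
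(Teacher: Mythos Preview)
Your forward-direction argument is correct and is essentially the same as the paper's: pick a point, find one covering node per layer, and use pairwise disjointness of the layers to conclude these nodes are distinct. Your write-up is in fact more careful than the paper's on the distinctness step.

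One point of comparison worth noting: the paper's proof goes on with a ``Reciprocally'' paragraph attempting the converse (if the area is $k$-covered then $k$ disjoint $1$-covering layers exist), via an informal contrapositive that assumes one can peel off $k-1$ layers giving \emph{exactly} $1$-coverage and then locates a point covered exactly once in each. You are right to flag that this direction is a genuine covering-decomposition question and is not needed for the algorithm's guarantee; the paper's argument for it is heuristic at best, so your decision to state only the ``if'' direction and justify why that suffices is the sounder choice.
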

\begin{proof}
If there exists $k$ sets of network nodes that provide $1$-coverage,
then let $x$ be any point in the area, $x$ is covered by each
layer. Thus, there exists $k$ nodes, one per layer, that cover
$x$. And the area is $k$-covered.

Reciprocally, it is not possible to find $k$ sets of nodes such that
each provide $1$-coverage. We can suppose that there exists $k-1$ sets
of nodes that provide exactly $1$-coverage, and a last set with the
remaining nodes that do not provide $1$-coverage. That is there exists
at least one coverage hole in the coverage provided by the $k$-th
set. Then let $x$ be a point in this coverage hole, then $x$ is in the
coverage range of exactly one node in each of  the first $k-1$ sets,
since these sets of node provide exactly $1$-coverage. The point $x$
is inside a coverage hole of the remaining nodes, then there is no
other node which coverage range covers $x$. And $x$ is not
$k$-covered.
\end{proof}

Therefore, to compute the $k$-coverage of a wireless network, we
intend to count the number of $1$-coverage layers. To slice the
network in layers, we use the simplicial complex reduction algorithm
that we presented in \cite{vergne_reduction_2013}. This reduction
algorithm takes as input a simplicial complex, then removes points and
their cofaces (that is the simplices they are part of) until it is no
more possible without creating neither a coverage hole nor a
disconnectivity in the network. At the end, we obtain a simplicial
complex that provides the same coverage as the initial complex with a
minimal set of points. Then the set of network nodes associated to
these points provide at least $1$-coverage on the whole covered area, but not
$2$-coverage on the whole area or more points could be removed and the
simplicial complex could be further reduced. However, locally
$2$-coverage is provided by the reduced complex, since we consider
coverage disks and disks can not tile the plane, there will always exist
intersection of disks. It is important to note that the reduction
algorithm needs the definition of a boundary (via a list of points) to
delimit the area to be covered, here it is the boundary of the area
where one need to compute the $k$-coverage. We can see an example of the
reduction algorithm on a Vietoris-Rips simplicial complex in
Fig.~\ref{fig_reduc}. 

\begin{figure}[h]
  \centering
    \includegraphics[width=4.35cm]{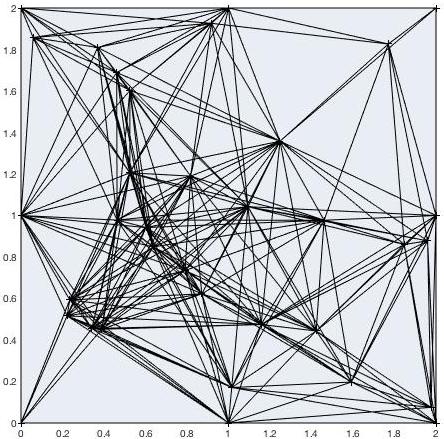}
 \includegraphics[width=4.35cm]{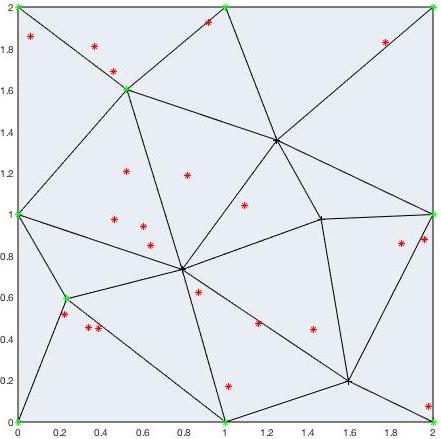}
  \caption{ A complex reduced by the reduction algorithm
    presented in \cite{vergne_reduction_2013}}
\label{fig_reduc}
\end{figure}

Our algorithm for computing $k$-coverage then takes as input the
positions of the network nodes, compute the simplicial complex to
represent their topology. Then, the reduction algorithm is applied,
its result constitutes of the first layer of $1$-coverage. This first
layer is then discarded, the simplicial complex is built on the
remaining points and the we re-apply the reduction algorithm on it. We
continue while the number of connected components stays at $1$, and
the number of coverage holes stays at $0$. At the end, our algorithm
provides the $k$ index of $k$-coverage of the wireless network, and
also supplies the $k$ sets of points/network nodes that are the $k$
layers of coverage. The pseudo-code of the algorithm is given in
Alg.~\ref{alg_couv}. 

\begin{algorithm}[H]
  \caption{$k$-coverage computing algorithm.}
\label{alg_couv}
  \begin{algorithmic}[h]
    \Require{set $V$ of $n$ vertices, coverage radius $r$.}
\State{$S:=\mathcal{R}_{2r}(V)$ or $\mathcal{C}_{r}(V)$\;}
\State{Computation of $\beta_0(S)$ and $\beta_1(S)$\;}
\State{$k:=0$\;}
\While{$\beta_{0}(S) =1$ and $\beta_{1}(S) =0$}
\State{$k=k+1$\;}
\State{Apply reduction algorithm to $S$\;}
\State{Save reduced complex as $k$-th layer\;}
\State{Save list of discarded vertices as $V'$\;}
\State{$S:=\mathcal{R}_{2r}(V')$ or $\mathcal{C}_{r}(V')$\;}
\State{Computation of $\beta_0(S)$ and $\beta_1(S)$\;}
\EndWhile{}
\Return{$k$ and the $k$ layers of coverage}
  \end{algorithmic}
\end{algorithm}

Our algorithm provides a lower-bound for $k$-coverage, that is that
$k$-coverage is guaranteed in every point of the area. More
specifically, when the algorithm returns the value $k$ for a wireless
networks on the area $A$, that means that:
\begin{itemize}
\item $\forall x \in A$, $x$ is $k$-covered,
\item $\exists x \in A$, $x$ is not $(k+1)$-covered,
\item There may exist some $x \in A$ that are $l$-covered with $l>k$ (at the
  intersection of coverage disks).
\end{itemize}

\section{Simulation results}
\label{sec_sim}
In this section, we give some figures illustrating the functioning of our
algorithm and present some simulation results on the
$k$-coverage of a wireless network simulated by a Poisson point
process.

We can see an example of the execution of our $k$-coverage computation
algorithm for the wireless network represented by a Vietoris-Rips
complex of Fig.~\ref{fig_rips}. This network was simulated with $N=40$
points randomly placed in a square of size $10$, plus a boundary of fixed
points on the square to delimit the area of the square in which we
want to compute the $k$-coverage. The coverage radius is set to $2.5$.
In order to obtain nicer figures, the process
used to draw the points positions is of hard-core type, that is it is
forbidden for $2$ points to be too close to each other.
\begin{figure}[h]
  \centering
    \includegraphics[width=4.35cm]{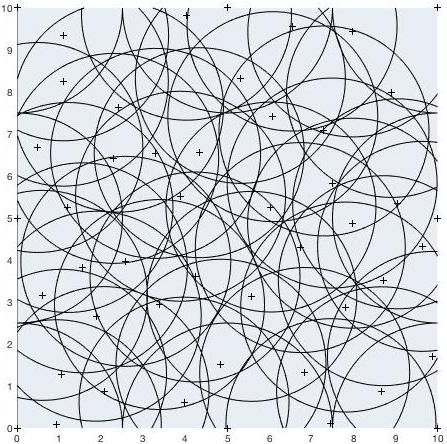}
 \includegraphics[width=4.35cm]{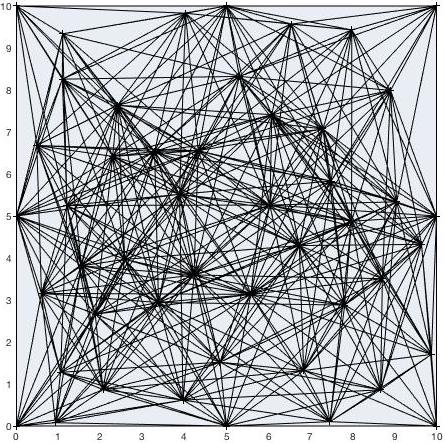}
  \caption{ A wireless network and its Vietoris-Rips representation}
\label{fig_rips}
\end{figure}

We can see in Fig.~\ref{fig_layers_rips} that our algorithm exhibits
$3$ layers of coverage, that means that the wireless network provides
$3$-coverage. The last layer presents a coverage hole in the bottom 
right corner, so $4$-coverage is not available in that part, and thus
on the square. In each subfigure, points in red are the remaining
points that are not yet part of a layer. On the left of each subfigure
is the wireless network representation with the coverage disks, and on
the right is the Vietoris-Rips representation.
\begin{figure}[h]
  \centering
 \includegraphics[width=4.35cm]{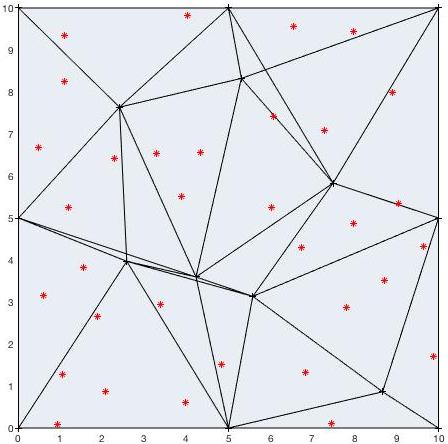}
 \includegraphics[width=4.35cm]{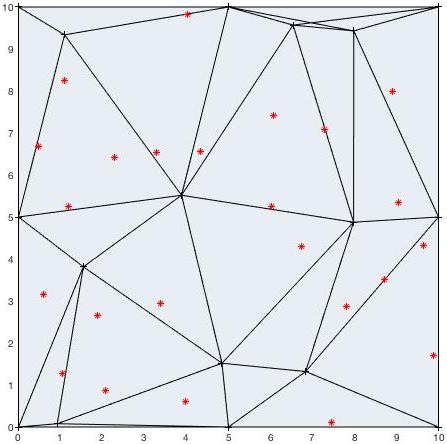}\\
 \includegraphics[width=4.35cm]{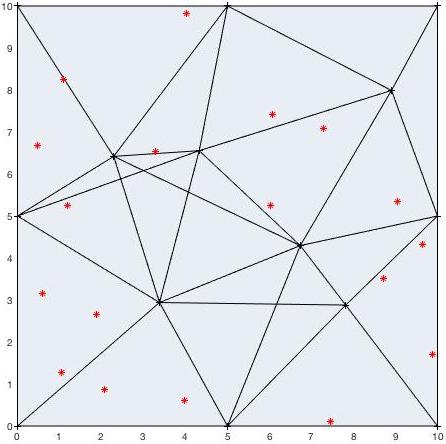}
 \includegraphics[width=4.35cm]{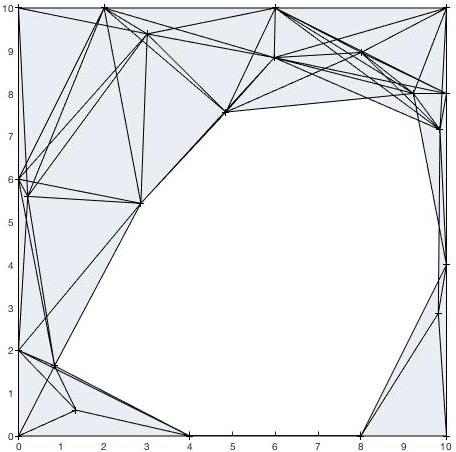}
  \caption{ The $3$ layers of coverage and the last incomplete layer.}
\label{fig_layers_rips}
\end{figure}

We also provide an example of our $k$-coverage algorithm running of the
\u{C}ech complex of Fig.~\ref{fig_cech}. The configuration set-up is the same as before,
except the number of points is initially set to $N=50$.
\begin{figure}[h]
  \centering
    \includegraphics[width=4.35cm]{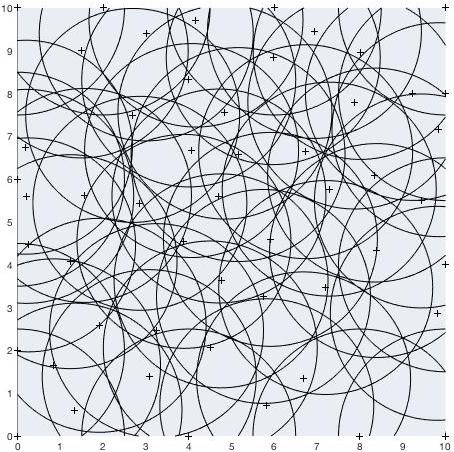}
 \includegraphics[width=4.35cm]{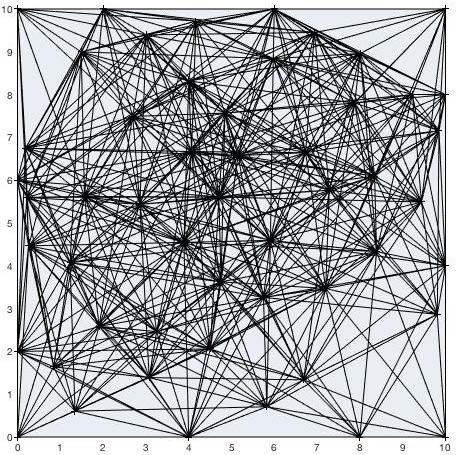}
  \caption{ A wireless network and its \u{C}ech representation}
\label{fig_cech}
\end{figure}

This network has more points and provides $4$ layers of coverage that
is $4$-coverage as we can see in Fig.~\ref{fig_layers_cech}.
\begin{figure}[h]
  \centering
 \includegraphics[width=4.35cm]{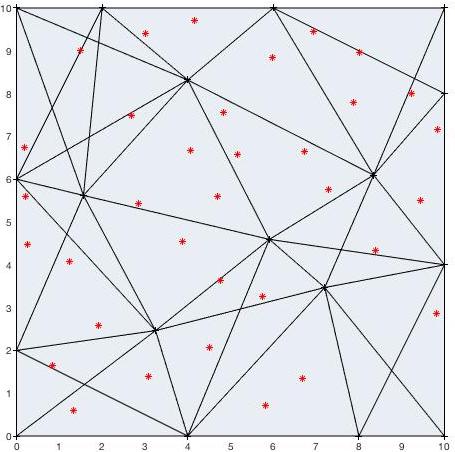}
 \includegraphics[width=4.35cm]{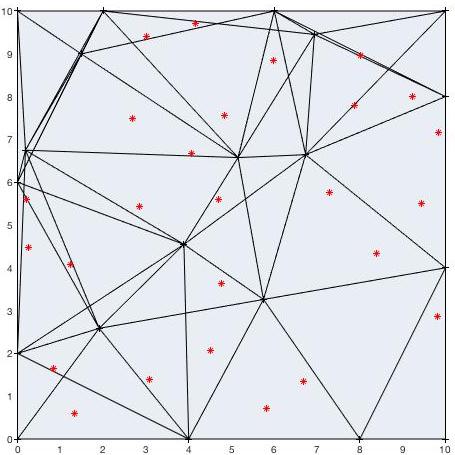}\\
 \includegraphics[width=4.35cm]{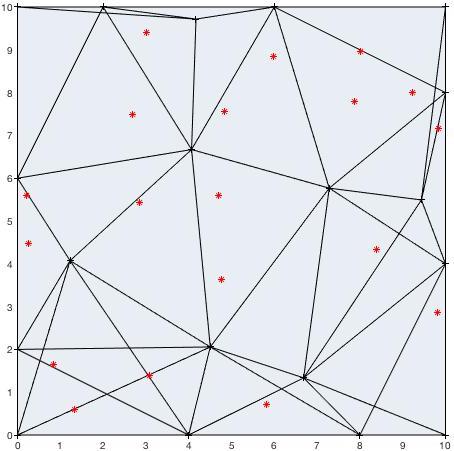}
 \includegraphics[width=4.35cm]{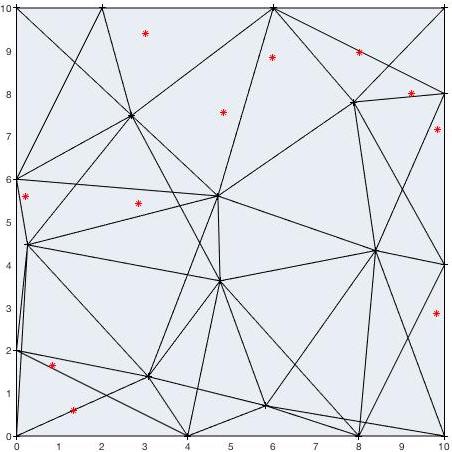}\\
 \includegraphics[width=4.35cm]{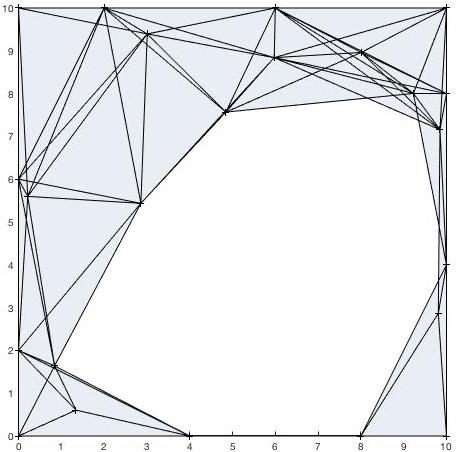}
  \caption{ The $4$ layers of coverage and the last incomplete layer.}
\label{fig_layers_cech}
\end{figure}

Finally, we provide some simulation results on the $k$-coverage of a
wireless network generated by a Poisson point process. We look at the
value of $k$, where $k$ is the maximum index such that the network
provides $k$-coverage, depending on the intensity of the process, that
is the mean number of points by surface unit, and whether the topology
is computed via a Vietoris-Rips or a \u{C}ech complex.

We consider a set of points generated with a Poisson point process of
intensity $\lambda$ on a square with side of size $10$. We add points
on the boundary of the square to delimit the area to be covered. The
coverage radius is set to $2.5$, one can note that $\pi r^2 \approx 20$.
We compute $\bar k$ the mean value of $k$ such that the complex
provides $k$-coverage and not $(k+1)$-coverage, on average on $1000$
simulations. 

\begin{table}[h]
\caption{Mean $\bar k$ for a Vietoris-Rips and for for a \u{C}ech complex}
\label{table}
\tiny
\begin{tabular}{|c|cccccccc|}
\hline
$\lambda$&$0.05$&$0.10$&$0.15$&$0.20$&$0.25$&$0.30$&$0.35$&$0.40$\\
\hline
$\lambda \pi r^2$&$1$&$2$&$3$&$4$&$5$&$6$&$7$&$8$\\
\hline
$\bar k$&$0.004$&$0.081$&$0.268$&$0.590$&$0.967$&$1.396$&$1.820$&$$\\
\hline
\end{tabular}

\smallskip

\begin{tabular}{|c|cccccccc|}
\hline
$\lambda$&$0.05$&$0.10$&$0.15$&$0.20$&$0.25$&$0.30$&$0.35$&$0.40$\\
\hline
$\lambda \pi r^2$&$1$&$2$&$3$&$4$&$5$&$6$&$7$&$8$\\
\hline
$\bar k$&$0.002$&$0.079$&$0.237$&$0.540$&$0.859$&$1.297$&$1.695$&$$\\
\hline
\end{tabular}
\end{table}
We can see in Table \ref{table} the result
of the simulations. Moreover, these results are plotted in the graph
of Fig.~\ref{fig_sim}. We can compare these results to the theoretic ones
of the mean $k$ for which a point $x$ is $k$-covered: $\E{k}=\lambda
\pi r^2$. We can see that, as expected, the simulated values
are below the theoretic ones. This is because, theoretically we are
only capable of computing the probability of a point to be
$k$-covered, but not the probability that a whole area is $k$-covered
without any coverage holes. The second one, that we can approach by
simulation, is smaller than the first one. Furthermore, our algorithm
guarantees $k$-coverage, but locally points may be $l$-covered with
$l>k$.
\begin{figure}[h]
  \centering
    \includegraphics[width=7cm]{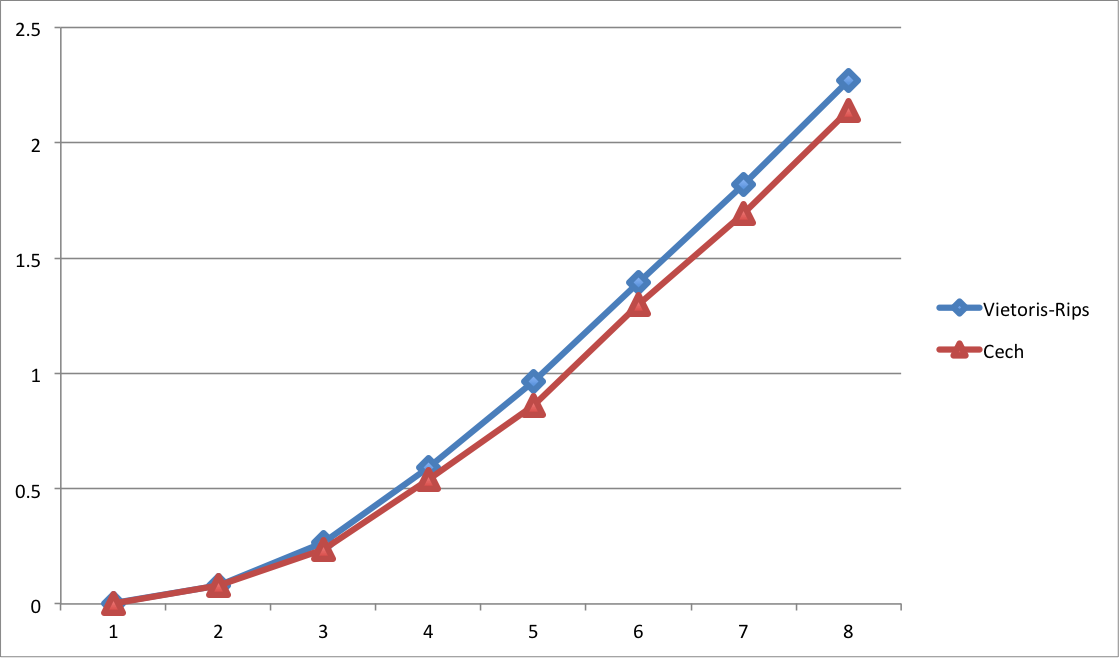}
  \caption{Mean $\bar k$ computed by our algorithm.}
\label{fig_sim}
\end{figure}

\section{Conclusion}
\label{sec_ccl}
In this article, we propose a method and an algorithm for computing
the $k$-coverage of a wireless network. The $k$-coverage is the fact
for a point to be covered by $k$ network nodes, this definition can be
extended to a whole area: an area is $k$-covered if every point in it
is $k$-covered. Theoretically it is easy to compute the probability
for a point to be $k$-covered for a wireless network generated by a
Poisson point process. However it is more difficult to apprehend the
$k$-coverage for a whole area, and we need simplicial homology
representation to compute the topology of the network as a whole. We
then exhibit that the $k$-coverage can be seen as $k$ layers of
$1$-coverage and give an algorithm that compute the $k$-coverage of a
wireless network. We provide some figures and simulation results
to illustrate our algorithm.


\end{document}